\newtheorem{theorem}{Theorem}
\newtheorem{cor}{Corollary}
\newtheorem{Lemma}{Lemma}
\def\beq{\begin{equation*}\begin{aligned}}
\def\eeq{\end{aligned}\end{equation*}}
\def\beqn{\begin{equation}\begin{aligned}}
\def\eeqn{\end{aligned}\end{equation}}
\providecommand{\customgenericname}{}
\newcommand{\newcustomtheorem}[2]{%
  \newenvironment{#1}[1]
  {%
   \renewcommand\customgenericname{#2}%
   \renewcommand\theinnercustomgeneric{##1}%
   \innercustomgeneric
  }
  {\endinnercustomgeneric}
}
\begin{document}

\title{All-Pay Auctions as Models for  Trade Wars and Military Annexation}
\author{Benjamin Kang,}
\affiliation{PRIMES, Massachusetts Institute of Technology, Cambridge, MA 02139 }
\author{James Unwin}
\affiliation{Department of Physics,  University of Illinois at Chicago, Chicago, IL 60607, USA}
\affiliation{Department of Physics, University of California, Berkeley \& Theoretical Physics Group, LBNL \& Mathematics Sciences Research Institute, Berkeley, CA 94720, USA}

\begin{abstract}
We explore an application of all-pay auctions to model trade wars and territorial annexation. Specifically, in the model we consider the expected resource, production, and aggressive (military/tariff) power are public  information, but actual resource levels are private knowledge. We consider the resource transfer at the end of such a competition which deprives the weaker country of some fraction of its original resources. In particular, we derive the quasi-equilibria strategies for two country conflicts under different scenarios. This work is relevant for the ongoing US-China trade war, and the recent Russian capture of Crimea, as well as historical and future conflicts.

\end{abstract}

\maketitle

\section{1.~Introduction}
\vspace{-1mm}

Auctions have long been used as mathematical models of competition \cite{Milgrom,Krishna,Capen,Cassady,Myerson,Vickrey,Wilson}. A setting useful for certain competitions is the `all pay' auction, in which losing bidders are required to pay a forfeit \cite{Kang,Krishna}.  International military conflicts provide perhaps the ultimate form of competition, and here we employ all pay auctions as a model of military territorial expansions and trade wars. We are unaware of any existing auction models of trade wars, although this is quite a natural description, and there are also few game theoretic analyses (although see e.g.~\cite{Harrison}).

Building on earlier work due to Hodler and Yekta\c s  \cite{Hodler} in which an all-pay auction model is used to model total military conquest,  we expand this framework to study the case of partial military conquest in which the victor takes only a fraction of the loser's resources, i.e.~the victor expands their territory at the expense of the loser. This generalization is  mathematically non-trivial and is more relevant to real world events, such as the Russian annexation of the Crimean Peninsula and the Israeli annexation of the Golan Heights. We also offer an reinterpretation of this framework in the context of trade wars.

This paper is structured as follows: In Section 2 we introduce the mathematical set-up of Hodler and Yekta\c s   \cite{Hodler}, and discuss how this model can be generalised to explore partial resource losses following a conflict between two countries. We formulate this scenario in terms of an all-pay auction with payoff function $W$. In Section~3 we discuss the mutual best responses for each country without incorporating constraints and then subsequently outline the conditions for quasi-equilibria. In contrast to \cite{Hodler}, obtaining the quasi-equilibria is mathematically  non-trivial and is the focus of Sections 4 \& 5. We first solve the relevant equations exactly for the special case that neither country has a competitive edge in Section~4. Then, in Section 5, we find an approximate solution for the quasi equilibrium strategies for the more general case. Section 6 provides a discussion of our results and an interpretation of the mathematical framework in terms of both conventional conflicts and trade wars.

\section{2.~Fractional Resource Forfeiture}
\vspace{-1mm}

	Emulating  \cite{Hodler},  we characterize the two countries (labelled by $i = 1, 2$) by their aggressive (e.g.~military)  power $\tilde{\lambda_i} \in \mathbb{R^+}$, production level $\tilde{\beta_i} \in \mathbb{R^+}$, and expected resource level $R_i \in \mathbb{R^+}$. While the values of these three variables are  public knowledge,  a private variable $r_i \in [0,1]$ is introduced such that the actual resource endowment of a country  $2r_iR_i$ is private information known only to a given country.  The private variables $r_i$ are publicly known to be distributed uniformly.

Furthermore, we introduce the notation $\lambda_i = 2R_1\tilde{\lambda_i}$ and $\beta_i = 2R_1\tilde{\beta_i}$ which corresponds to the maximum potential aggression  and production,  respectively. We also define the ratio between countries of these quantities
	\beq
	\lambda = \frac{\lambda_1}{\lambda_2} \qquad {\rm and} \qquad \beta = \frac{\beta_1}{\beta_2}.
	\eeq
	 It can be assumed without loss of generality that $\beta \leq \lambda$; implying Country $1$ has an advantage in aggressive potential, while Country $2$ has an advantage in production.
	 
	Suppose that $b_i$ is the resource allocation, such that $b_i\lambda_i$ is allocated to aggressive actions and $(r_i-b_i)\beta_i$ is allocated to production and define the equilibrium strategies $f_i(r_i) = b_i$  such that $f_i(r_i) :[0,1] \rightarrow [0,r_i]$. It will be useful to note the follow lemma from \cite{Hodler}:

\vspace{0.1mm}
	\begin{Lemma}[HY \cite{Hodler}] For a function $f_i(r_i)\in[0,r_i]$, in any monotone equilibrium then $f_1(0) =f_2(0) = 0$. Further, $f_1$ and $f_2$ are non-decreasing and  $\lambda f_1(1) = f_2(1)$.
\label{l1}\end{Lemma}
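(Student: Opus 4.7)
My plan is to handle the three conclusions separately, since each has a distinct character.

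For the boundary condition $f_1(0) = f_2(0) = 0$, I would simply invoke the codomain constraint already written into the setup: $f_i(r_i) \in [0,r_i]$, so $f_i(0)$ is forced into $[0,0] = \{0\}$. No equilibrium reasoning is required; this step is essentially definitional.

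For monotonicity, I would use a standard revealed-preference (single-crossing) argument. Fix types $r_i < r_i'$ of country $i$ and suppose toward contradiction that $f_i(r_i) > f_i(r_i')$. Since both types face the identical distribution over the opponent's bid, the probability of winning depends only on $b_i$ and is non-decreasing in $b_i$. Writing out the incentive constraints for the two types and adding them eliminates the (common) winning-probability/prize terms and leaves a strict inequality on the production payoff $\beta_i[(r_i - b)+(r_i' - b')]$ versus $\beta_i[(r_i - b')+(r_i' - b)]$, which is identically zero. The remaining strict inequality forces the higher-type bid to exceed the lower-type bid, contradicting the supposition. (This is essentially the standard monotone-comparative-statics step of Milgrom--Shannon applied to the all-pay structure.) In fact, if one takes "monotone equilibrium" to mean non-decreasing by definition, this clause of the lemma is tautological and can simply be noted.

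For the tied-at-the-top condition $\lambda f_1(1) = f_2(1)$, equivalently $\lambda_1 f_1(1) = \lambda_2 f_2(1)$, I would argue by contradiction. Suppose WLOG that $\lambda_1 f_1(1) > \lambda_2 f_2(1)$. Since $f_2$ is non-decreasing (by the preceding step), any realization of $r_2 \in [0,1]$ satisfies $\lambda_2 f_2(r_2) \leq \lambda_2 f_2(1) < \lambda_1 f_1(1)$, so type $r_1 = 1$ of country 1 wins the auction with probability one. By continuity, country 1 could instead deviate to a bid $b_1 < f_1(1)$ with $\lambda_1 b_1$ just above $\lambda_2 f_2(1)$ and still win with probability one, while strictly reducing aggression expenditure and thereby strictly increasing the production term $(1 - b_1)\beta_1$. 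This is a strict improvement in $W$, contradicting equilibrium. By symmetry, the reverse strict inequality also contradicts equilibrium, forcing equality.

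The main obstacle is the last step, because it implicitly uses properties of the payoff function $W$ that are not spelled out in the excerpt: specifically, that winning probabilities are determined by the ranking of the $\lambda_i b_i$, that $W$ is strictly increasing in the production allocation for a fixed winning probability, and that ties occur on a measure-zero set so the "just above" deviation is genuinely winning. Once these mild regularity properties of $W$ are assumed (as is standard in the Hodler--Yekta\c{s} framework being emulated), the contradiction argument goes through cleanly.
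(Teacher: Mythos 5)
The paper does not actually prove this lemma: it is imported verbatim from Hodler and Yekta\c{s} \cite{Hodler} ("It will be useful to note the follow lemma from \cite{Hodler}"), so there is no internal argument to compare against, and supplying one as you do is a reasonable exercise. Your treatment of the two substantive claims is correct and matches the standard argument in that literature: $f_i(0)=0$ is forced by the feasibility constraint $f_i(r_i)\in[0,r_i]$, and the condition $\lambda f_1(1)=f_2(1)$ (i.e.\ $\lambda_1 f_1(1)=\lambda_2 f_2(1)$) follows from the undercutting deviation you describe, since by monotonicity a strict gap at the top would let the top type of the stronger bidder lower its bid, keep a sure win with the same spoils term, and strictly increase its production payoff; note that in this setup equilibrium requires optimality type by type, so a deviation by the single type $r_1=1$ suffices, and you do not even need the "ties are measure zero" caveat because the deviating effective bid strictly exceeds every opposing effective bid.

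The one genuine flaw is the revealed-preference step for monotonicity as you wrote it. With the payoff \eqref{WWW}, country $i$'s interim payoff from bid $b$ has the form $\beta_i(r_i-b)\bigl[\alpha+(1-\alpha)P(b)\bigr]+(1-\alpha)S(b)$, where $P(b)$ is the probability of winning and $S(b)$ the expected spoils: the production term is scaled by $\alpha$ in the losing event and by $1$ in the winning event, so it does \emph{not} separate from the winning-probability terms, and adding the two incentive constraints does not produce the cancellation you claim. What the summed constraints actually give is $(r_i-r_i')\bigl[P(f_i(r_i))-P(f_i(r_i'))\bigr]\ge 0$, i.e.\ monotonicity of the winning probability in type, and passing from that to monotonicity of the bid needs an extra step (ruling out a higher bid with the same winning probability via a cost-saving deviation, with care at $\alpha=0$, $P=0$). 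In practice this is moot, because, as you yourself observe and as Hodler--Yekta\c{s} intend, "monotone equilibrium" means an equilibrium in non-decreasing strategies, so that clause of the lemma is definitional; but the single-crossing argument as stated would not survive scrutiny and should either be repaired along the lines above or dropped in favor of the definitional reading.
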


\vspace{-3mm}
It is assumed that the country with more aggressive potential will win the competition, acquiring a fraction of resources  $\alpha \in [0,1] $ from the losing party, while also retaining all of their initial resource endowment. This can be encapsulated in the following payoff function
\beqn W_i = \begin{cases}
	\alpha\beta_i(r_i-b_i) & \lambda_ib_i < \lambda_jb_j\\[3pt]
	\beta_i(r_i-b_i)  & \lambda_ib_i = \lambda_jb_j \\[3pt]
	\beta_i(r_i-b_i) + (1-\alpha)\beta_j(r_j-b_j) & \lambda_ib_i > \lambda_jb_j
\end{cases}.\label{WWW}
\eeqn
Note that taking $\alpha=1$ recovers the case studied in  \cite{Hodler}.

\section{3.~Equilibrium Strategies}
\vspace{-1mm}

Given the form of $W_i$ it follows that the payoff for Country $1$ making an optimal bid $y=f(r_1)$ is given by 
\beq
u_1(y)=&\int_0^{f_2^{-1}(\lambda y)}\left[\beta_1(r_1-y)+(1-\alpha)\beta_2(r_2-f_2(r_2))\right]{\rm d}r_2\\
&+ \int_{f_2^{-1}(\lambda y)}^1\left[\alpha\beta_1(r_1-y)\right]{\rm d}r_2.
\eeq
One can formulate each countries mutual best response:

\begin{Lemma}
Disregarding any constraints, the mutual best responses for each party is to follow strategies of the form
\beqn \label{eq:8}
f_1(r_1) = &\frac{K_0}{(\lambda+2\beta)(1-\alpha)}((1-\alpha)r_1+\alpha)^{\frac{\beta}{\lambda}}\\
&+\frac{\beta}{\beta+2\lambda}r_1-\frac{\alpha(\beta\lambda + \beta+2\lambda)}{(\beta+\lambda)(\beta+2\lambda)(1-\alpha)} \\ &+ K_1((1-\alpha)r_1+\alpha)^{-(1+\frac{\beta}{\lambda})},\\[5pt]
f_2(r_2) = &\frac{\lambda\beta K_0^{-\frac{\lambda}{\beta}}}{(2\lambda+\beta)(1-\alpha)}((1-\alpha)r_2+\alpha)^{\frac{\lambda}{\beta}}\\
&+\frac{\lambda}{2\beta+\lambda}r_2-\frac{\alpha\beta\lambda(1+ 2\beta+\lambda)}{(\beta+\lambda)(2\beta+\lambda)(1-\alpha)} \\ &+ K_2((1-\alpha)r_2+\alpha)^{-(1+\frac{\lambda}{\beta})}~,
\eeqn
where $K_0$, $K_1$, and $K_2$ are arbitrary constants.
\end{Lemma}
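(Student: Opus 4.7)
The approach is the standard one for Bayesian all-pay auctions: derive the first-order optimality conditions (FOCs) for each country, convert them into coupled differential equations, and solve. First, simplify $u_1(y)$ by noting that with $\rho \equiv f_2^{-1}(\lambda y)$ the bid cost $\beta_1(r_1-y)$ carries a combined weight $\alpha+(1-\alpha)\rho$, so $u_1(y) = \beta_1(r_1-y)[\alpha+(1-\alpha)\rho] + (1-\alpha)\beta_2 \int_0^{\rho}[r_2-f_2(r_2)]\,dr_2$. Differentiating via Leibniz's rule, using $\rho'(y) = \lambda/f_2'(\rho)$ and $f_2(\rho)=\lambda y$, and setting the derivative to zero at $y=f_1(r_1)$ yields the FOC
\beq
\beta_1[\alpha+(1-\alpha)\rho]\,f_2'(\rho) = \lambda(1-\alpha)\bigl[\beta_1(r_1-f_1(r_1)) + \beta_2(\rho-f_2(\rho))\bigr],
\eeq
valid along the equilibrium locus $f_2(\rho)=\lambda f_1(r_1)$. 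The identical calculation for Country 2, with the roles of 1 and 2 interchanged and $\lambda$ replaced by $1/\lambda$, produces a partner FOC in which the threshold type reduces to $r_1$ itself on the locus.

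Second, I combine the two FOCs. Their ratio, together with the chain-rule identity $f_2'(\rho)\,d\rho/dr_1 = \lambda f_1'(r_1)$ obtained by differentiating the locus, collapses to a separable equation in $(r_1,\rho)$ that integrates to the power-law relation $(\alpha+(1-\alpha)\rho)^{\lambda} \propto (\alpha+(1-\alpha)r_1)^{\beta}$, fixing the one-parameter coupling between Country 2's tying type and Country 1's type. Feeding this back into either FOC reduces the problem to a single linear ODE for $f_1(r_1)$ that is equidimensional (Euler-type) in $s_1 \equiv \alpha+(1-\alpha)r_1$, hence admits homogeneous power-law solutions $s_1^{\beta/\lambda}$ and $s_1^{-(1+\beta/\lambda)}$, while the inhomogeneous right-hand side is absorbed by a particular integral linear in $r_1$ plus a constant.

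Third, I recover $f_2$ from the locus $f_2(\rho)=\lambda f_1(r_1)$ combined with the $(r_1,\rho)$ coupling, which produces the partner expression with exponents $\lambda/\beta$ and $-(1+\lambda/\beta)$ in the variable $s_2 \equiv \alpha+(1-\alpha)r_2$. The integration constant from the separable step becomes the shared $K_0$ that ties the dominant modes of $f_1$ and $f_2$ together (and in particular fixes the prefactor $K_0^{-\lambda/\beta}$ in $f_2$), whereas the two decaying-mode amplitudes decouple and remain as independent free constants $K_1$ and $K_2$.

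The principal obstacle is the algebra of the middle step: cleanly decoupling the two FOCs, integrating the separable relation between $r_1$ and $\rho$, and reducing everything to an Euler equation in $s_1$ all require careful bookkeeping with the $\beta_i$, $\lambda_i$, and $\alpha$ parameters. Equally, once the ansatz is in hand, verifying that it solves \emph{both} FOCs simultaneously — matching coefficients mode-by-mode against the basis $\{s_i^{\beta/\lambda},\,s_i^{-(1+\beta/\lambda)},\,r_i,\,1\}$ — is what pins down the specific numerical prefactors $\beta/(\beta+2\lambda)$, $1/(\lambda+2\beta)$, and the $\alpha$-dependent constant term recorded in the statement.
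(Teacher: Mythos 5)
Your proposal is correct and follows essentially the same route as the paper: derive the first-order conditions for both countries, eliminate the common bracket (the paper's $F(y)$) to obtain the separable relation that integrates to $(\alpha+(1-\alpha)\rho)=K_0(\alpha+(1-\alpha)r_1)^{\beta/\lambda}$, and substitute this back into the FOCs to reduce them to first-order linear ODEs in each $f_i$, whose solutions give the stated forms. One small imprecision: the reduced equation is first order, so only $s_i^{-(1+\beta/\lambda)}$ is a genuine homogeneous mode (hence the free constants $K_1$, $K_2$), while the $s_i^{\beta/\lambda}$ term is the particular solution forced by the $K_0$ inhomogeneity --- which is precisely why its coefficient is locked to $K_0/((\lambda+2\beta)(1-\alpha))$ rather than being free, as your final mode-matching step in effect acknowledges.
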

\begin{proof}
To identify the value of $y$ which maximizes the payoff function $u_1(y)$ for Country 1, we set the derivative with respect to $y$ to zero to obtain the condition
\beqn \label{eq:1} 
F(y)\frac{df_2^{-1}(\lambda y)}{dy}
=\beta f_2^{-1}(\lambda y) + \frac{\alpha\beta}{1-\alpha}.&
\eeqn
where $F(y)=\beta(f_1^{-1}(y)-y)+f_2^{-1}(\lambda y)-\lambda y$.
Writing Country $2$'s bid similarly as $y = f_2(r_2)$ to maximise the payoff $y$ should satisfy
\beqn \label{eq:2} 
F(y)\frac{df_1^{-1}(y)}{\lambda dy}
=f_1^{-1}(y) +  \frac{\alpha}{1-\alpha}.&
\eeqn
Since the factor $F(y)$ appears in both eqns.~\eqref{eq:1} \& \eqref{eq:2} we can substitute for one of these factors, leading to 
\beq
 (\ln[(1-\alpha)f_2^{-1}(\lambda y) + \alpha])' = \frac{\beta}{\lambda}(\ln[(1-\alpha)f_1^{-1}(y) + \alpha])'.
\eeq
Integrating, it follows that
\beqn \label{xd}(1-\alpha)f_2^{-1}(\lambda y) + \alpha = K_0((1-\alpha)f_1^{-1}(y) + \alpha)^{\frac{\beta}{\lambda}}~,
\eeqn
	where $K_0$ is a constant. To proceed we identify that Country 1's optimal bid $f_1(r_1) = y$ is related to Country 2's optimal bid $f_2(r_2) =z$ by  $y = z/\lambda$. 
	
Then substituting eq.~\eqref{xd} back into eq.~\eqref{eq:1} and eq.~\eqref{eq:2}, and also substituting for $x$ and $y$, we obtain the following two equations:
\beq \label{eq:6}
\lambda \frac{df_1(r_1)}{{\rm d}r_1} = &K_0((1-\alpha)r_1+\alpha)^{\frac{\beta}{\lambda}-1}-\frac{\alpha}{(1-\alpha)r_1 + \alpha}\\ & - \frac{(1-\alpha)(\beta+\lambda)f_1(r_1)}{(1-\alpha)r_1 + \alpha} +\frac{(1-\alpha)\beta r_1}{(1-\alpha)r_1 + \alpha},\\[5pt]
\beta \frac{df_2(r_2)}{{\rm d}r_2} = &\lambda\beta K_0^{-\frac{\lambda}{\beta}}((1-\alpha)r_2+\alpha)^{\frac{\lambda}{\beta}-1}-\frac{\lambda\beta\alpha}{(1-\alpha)r_2 + \alpha} \\ &- \frac{(1-\alpha)(\beta+\lambda)f_2(r_2)}{(1-\alpha)r_2 + \alpha} +\frac{(1-\alpha)\lambda r_2}{(1-\alpha)r_2 + \alpha}.
\eeq
Solving the above first order differential equations for $f_1$ and $f_2$ yields the stated result.
\end{proof}

To find the quasi-equilibrium strategies of each player, we apply the  boundary conditions below (following \cite{Hodler})
\beqn
& f_1(0)=f_2(0)=0\\
 &
 f_1(1)=f_2(1)/\lambda~.
\label{bcbc}
\eeqn
These conditions are appropriate to the problem due to the properties highlighted in Lemma \ref{l1}.

 These boundary conditions lead to three equations involving three unknown constants $K_0,K_1,K_2$.
Solving the two conditions at $r_i=0$ for $K_1$ and $K_2$ in terms of $K_0$, the final boundary condition  $f_1(1)=f_2(1)/\lambda$ can be expressed in the following form
\beqn \label{eq:13}
a K_0+b K_0^{-\frac{\lambda}{\beta}}
& =c,
\eeqn
where 
\beqn\label{abc}
a&=\frac{\lambda (1 - \alpha^{1 + 2\frac{\beta}{\lambda}})}{(\lambda+2\beta)(1-\alpha)}~, \\
b&= \frac{\lambda\beta (1 - \alpha^{1 + 2\frac{\lambda}{\beta}})}{(2\lambda+\beta)(\alpha-1)}~,
 \\
c&=-\frac{\beta\lambda}{\beta+2\lambda}+\frac{\alpha\lambda(\beta\lambda + \beta+2\lambda)(1 - \alpha^{1 + \frac{\beta}{\lambda}})}{(\beta+\lambda)(\beta+2\lambda)(1-\alpha)} \\  
&~~~~+\frac{\lambda}{2\beta+\lambda}
-\frac{\alpha\beta\lambda(1+ 2\beta+\lambda)(1 - \alpha^{1 + \frac{\lambda}{\beta}})}{(\beta+\lambda)(2\beta+\lambda)(1-\alpha)}~.
\eeqn
Unfortunately, eq.~\eqref{eq:13} does not a have an exact general solution, but as a first approach in the next section we will investigate the case  $\lambda = \beta$ which is exactly solvable.

\section{4.~Quasi-equilibria for the case $\lambda = \beta$ }
\vspace{-1mm}

	The case  $\lambda = \beta$ implies $\lambda_1/\beta_1=\lambda_2/\beta_2$ and represents the scenario in which neither country has a comparative advantage in aggressive power for a given level of  production.  Note that  $\lambda = \beta=1$ would imply that both countries are equal in both aggressive power and production level, and thus neither country has any advantage. 
	
		\begin{theorem}
	The quasi-equilibrium strategies for each country for $\lambda = \beta$ and $r_i \in [0,1]$ are
	\beq
	&f_1(r_1) = \frac{\beta + 1}{3\beta}r_1 -\frac{\alpha(\beta + 1)}{6\beta(1-\alpha)}+ \frac{\alpha^3(\beta+1)}{6\beta(1-\alpha)^3(r_1+\frac{\alpha}{1-\alpha})^{2}},\\[5pt]
 &f_2(r_2) = 
	\frac{\beta +1}{3}r_2-\frac{\alpha(\beta+1)}{6(1-\alpha)}+ \frac{\alpha^3(\beta+1)}{6(1-\alpha)^3(r_2+\frac{\alpha}{1-\alpha})^{2}}.
	\eeq
	\end{theorem}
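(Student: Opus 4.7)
The plan is to specialise the general solutions from Lemma 2 to $\lambda = \beta$, then pin down the three constants $K_0, K_1, K_2$ using the boundary conditions \eqref{bcbc}. Setting $\lambda = \beta$ collapses the exponents $\beta/\lambda$ and $\lambda/\beta$ to $1$, so the $K_0$ terms in $f_1, f_2$ become linear in $(1-\alpha)r_i + \alpha$ and the $K_1, K_2$ terms become inverse squares in the same argument. Combining the linear piece produced by $K_0$ with the pre-existing $\frac{\beta}{\beta+2\lambda}\,r_1 = \frac{1}{3}r_1$ contribution yields a coefficient $\frac{K_0+\beta}{3\beta}$ on $r_1$ in $f_1$, which must match the stated $\frac{\beta+1}{3\beta}$; this already suggests $K_0 = 1$.

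To confirm this rigorously I would turn to equation \eqref{eq:13}. At $\lambda = \beta$ it reduces to $aK_0 + bK_0^{-1} = c$, equivalent to the quadratic $aK_0^2 - cK_0 + b = 0$. Substituting $\lambda=\beta$ into \eqref{abc} and using $1-\alpha^3 = (1-\alpha)(1+\alpha+\alpha^2)$ gives the clean expressions $a = (1+\alpha+\alpha^2)/3$ and $b = -\beta(1+\alpha+\alpha^2)/3$. The constant $c$ is more delicate: its four contributions must be combined, the factorisation $1-\alpha^2=(1-\alpha)(1+\alpha)$ applied, and the surviving numerators rearranged to pull out the common factor, producing $c = (1-\beta)(1+\alpha+\alpha^2)/3$. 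This simplification is the main algebraic obstacle. Once $a,b,c$ are in hand the quadratic factors as $(K_0-1)(K_0+\beta)=0$; the root $K_0=-\beta$ annihilates the linear term of $f_1$ and so violates the monotonicity demanded by Lemma \ref{l1}, leaving $K_0 = 1$.

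With $K_0 = 1$ fixed, the constants $K_1$ and $K_2$ are determined by the two conditions $f_1(0) = f_2(0) = 0$, each of which is a single linear equation in one unknown. Evaluating $f_1$ at $r_1=0$, pulling out a common factor of $\alpha$, and solving yields $K_1 = \alpha^3(\beta+1)/[6\beta(1-\alpha)]$, with an analogous expression for $K_2$. Finally, rewriting $((1-\alpha)r_i+\alpha)^{-2}$ as $(1-\alpha)^{-2}(r_i + \alpha/(1-\alpha))^{-2}$ and collecting terms reproduces the closed-form expressions in the statement of the theorem. Beyond the simplification of $c$, every remaining step is routine substitution.
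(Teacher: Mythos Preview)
Your proposal is correct and follows essentially the same route as the paper: specialise Lemma~2 to $\lambda=\beta$, reduce eq.~\eqref{eq:13} to a quadratic in $K_0$, select the root $K_0=1$, and then read off $K_1,K_2$ from $f_1(0)=f_2(0)=0$. The only substantive difference is your reason for discarding $K_0=-\beta$: the paper simply invokes eq.~\eqref{xd}, whose left-hand side is manifestly positive, forcing $K_0\ge 0$; your monotonicity argument reaches the same conclusion but, as stated, skips the check that the surviving $K_1$-term is positive (and hence decreasing), so the paper's justification is slightly cleaner.
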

	\begin{proof}
		For $\lambda = \beta$ the exponent of $K_0^{-\frac{\lambda}{\beta}}$ is simply $-1$ and the coefficients of eq.~\eqref{abc} simplify to 
	\beq
a&=\frac{ (1 - \alpha^3)}{3(1-\alpha)}~, 
\hspace{15mm}
b= \frac{\beta (1 - \alpha^3)}{3(\alpha-1)}~,
 \\
c&=\frac{1-\beta}{3}+\frac{1}{3}\alpha(1+\alpha)(1-\beta)~.
\eeq
Therefore, eq.~\eqref{eq:13} simplifies to 
\beq \label{eq:14}
\left( \frac{1+\alpha+\alpha^2}{3}\right)\left(K_0^2 -(1-\beta)K_0 - \beta\right) =0,
	\eeq
which has solutions  $K_0 = 1, -\beta$.
Since  $\beta\geq0$ and from eq.~\eqref{xd} it follows $K_0\geq 0$, we hence take $K_0 = 1$. Then evaluating the forms of $K_1$ and $K_2$ (obtained through the conditions $f_1(0) = f_2(0) = 0$) we obtain
\beq \label{eq:15}
	K_1 = \frac{\alpha^3(\beta+1)}{6\beta(1-\alpha)},
\qquad
	K_2 = \frac{\alpha^3(\beta+1)}{6(1-\alpha)}.
	\eeq
	Substituting $K_0=1,$ and the forms for $K_1$ and $K_2$ above, into the expressions of Lemma \ref{l1} completes the proof.
		\end{proof}

			\begin{cor}
	The quasi-equilibrium strategies for each country for $\lambda = \beta$ are related via
\beq
	 f_2(r_i)=\beta f_1(r_i)~.
	 \eeq
	 			\end{cor}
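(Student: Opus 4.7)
The plan is to verify the corollary by direct inspection of the closed-form expressions for $f_1$ and $f_2$ already produced in Theorem 1, without reopening the ODE analysis or recomputing the integration constants $K_0, K_1, K_2$.

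Concretely, the steps are as follows. First, write out $\beta\cdot f_1(r)$ using the formula from Theorem 1. Each of the three summands of $f_1(r)$ — the linear term with coefficient $(\beta+1)/(3\beta)$, the constant term with coefficient $\alpha(\beta+1)/[6\beta(1-\alpha)]$, and the $(r+\alpha/(1-\alpha))^{-2}$ term with prefactor $\alpha^{3}(\beta+1)/[6\beta(1-\alpha)^{3}]$ — carries a common factor of $1/\beta$, which is cleared by the multiplication. Second, compare term by term with the displayed formula for $f_2(r)$ in Theorem 1; the three summands agree exactly. Since the resulting identity in $r$ is valid on all of $[0,1]$, evaluating at any common argument $r_i$ yields $f_2(r_i)=\beta f_1(r_i)$.

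There is no significant obstacle here — the content of the corollary is effectively encoded in the parallel structure of the formulas in Theorem 1. If a structural justification is preferred to bare algebraic matching, one can observe that with $\lambda=\beta$ and $K_0=1$ the pair of first-order ODEs derived in the proof of Lemma 2 is mapped to itself by the rescaling $f_2\mapsto \beta f_1$: the two mismatched driving terms, after combining, collapse to the single expression $(\beta+1)(1-\alpha)r/[(1-\alpha)r+\alpha]$ on both sides. Combined with the shared boundary value $f_1(0)=f_2(0)=0$, uniqueness of the resulting linear first-order ODE forces the two quasi-equilibrium strategies to be exactly proportional with ratio $\beta$, so that $f_2=\beta f_1$ is a structural consequence rather than an algebraic coincidence.
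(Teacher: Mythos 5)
Your proposal is correct and matches the paper's (implicit) justification: the corollary follows by multiplying the Theorem~1 expression for $f_1$ by $\beta$ and observing term-by-term agreement with $f_2$. The additional ODE-based remark is a valid structural cross-check (with $\lambda=\beta$, $K_0=1$, the function $\beta f_1$ satisfies the same linear first-order equation and boundary condition as $f_2$), but it is not needed beyond the direct comparison.
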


	This result implies that whichever country has the absolute advantage in both aggressive power and production is expected to win the trade war. Moreover, for the sub-case $\lambda = \beta=1,$ in which the two countries have equal strength in both aggressive power and production, the  equilibrium strategies are identical.

\section{5.~Quasi-equilibria for the case $\lambda \neq \beta$}
\vspace{-1mm}

An analysis of $\lambda \neq \beta$ requires a little care, since for $\lambda\neq\beta$ the equations exhibit singularities for $r_i\rightarrow0$. As a result, a minor modification is required to consider the more general case.
Specifically, to obtain the equilibrium strategies for the more general case we will restrict the private resource variable to $r_i \in [\epsilon,1]$ for some small positive constant $\epsilon$. The presence of a fixed $\epsilon$ cuts off the divergences and in this case eq.~(\ref{eq:13}) can be solved. The analysis however is substantially more complicated than the earlier case studied here (and the study in \cite{Hodler}). 
	
	Introducing this infinitesimal $\epsilon$, we adapt the boundary conditions of eq.~(\ref{bcbc}) to the following
	\beqn
	 f_1(\epsilon)=f_2(\epsilon)=0, \hspace{5mm}
	f_1(1)=f_2(1)/\lambda~.
	\eeqn
	Taking eq.~(\ref{eq:8}) of Lemma 2 and 
	evaluating $f_1$ at $\epsilon$ gives 
	\beq \label{eq:11}
	f_1(\epsilon)&=\frac{K_0}{(\lambda+2\beta)(1-\alpha)}\left[\alpha\left(1+\frac{1-\alpha}{\alpha}\epsilon\right)\right]^{\frac{\beta}{\lambda}} + \frac{\beta \epsilon}{\beta+2\lambda}\\
	&-\frac{\alpha(\beta\lambda + \beta+2\lambda)}{(\beta+\lambda)(\beta+2\lambda)(1-\alpha)} + K_1\left[\alpha(1+\frac{1-\alpha}{\alpha}\epsilon)\right]^{-(1+\frac{\beta}{\lambda})}.
	\eeq
	Taking first the boundary  condition $f_1(\epsilon)=0$ and solving for $K_1$ at first order in the infinitesimal quantity $\epsilon$ implies
\beqn
K_1=\frac{ \frac{K_0\alpha^{\frac{\beta}{\lambda}}}{(\lambda+2\beta)(1-\alpha)}\left(1+\frac{(1-\alpha)\beta\epsilon}{\alpha\lambda}\right) + \frac{\beta\epsilon}{\beta+2\lambda} -\frac{\alpha(\beta\lambda + \beta+2\lambda)}{(\beta+\lambda)(\beta+2\lambda)(1-\alpha)}}{\alpha^{-(1+\frac{\beta}{\lambda})}((1+\frac{\beta}{\lambda})\frac{1-\alpha}{\alpha}\epsilon-1)}.
\label{10}
\eeqn
	Then applying the next condition $f_2(\epsilon)=0$ and solving for $K_2$ at first order in  $\epsilon$ gives
\beqn
K_2=\frac{ \frac{\lambda\beta K_0^{-\frac{\lambda}{\beta}}\alpha^{\frac{\lambda}{\beta}}}{(2\lambda+\beta)(1-\alpha)}\left(1+\frac{(1-\alpha)\lambda\epsilon}{\alpha\beta}\right) + \frac{\lambda\epsilon}{2\beta+\lambda}-\frac{\alpha\beta\lambda(1+ 2\beta+\lambda)}{(\beta+\lambda)(2\beta+\lambda)(1-\alpha)}}{\alpha^{-(1+\frac{\lambda}{\beta})}((1+\frac{\lambda}{\beta})\frac{1-\alpha}{\alpha}\epsilon-1)}.
\label{11}
\eeqn
	The final boundary condition $f_1(1)=f_2(1)/\lambda$ requires 
	\beqn 
&\frac{K_0}{(\lambda+2\beta)(1-\alpha)} +\frac{\beta}{\beta+2\lambda}-\frac{\alpha(\beta\lambda + \beta+2\lambda)}{(\beta+\lambda)(\beta+2\lambda)(1-\alpha)} + K_1\\[5pt]
 = &\frac{\beta K_0^{-\frac{\lambda}{\beta}}}{(2\lambda+\beta)(1-\alpha)} +\frac{1}{2\beta+\lambda}-\frac{\alpha\beta(1+ 2\beta+\lambda)}{(\beta+\lambda)(2\beta+\lambda)(1-\alpha)} + \frac{K_2}{\lambda}.
\eeqn
Substituting the forms of $K_1$ and $K_2$ into the above and rearranging, we arrive at the condition
\beqn\label{xx}
&
  K_0 =\frac{C_3- C_1 K_0^{-\frac{\lambda}{\beta}} }{C_2}~,
\eeqn
where $C_1$, $C_2$ and $C_3$ are parameter dependent constants
\beq
C_1&=  \frac{\beta }{(1-\alpha)(2\lambda+\beta)} \left(1+ \frac{\alpha^{\frac{\lambda}{\beta}}\left(1+\frac{\lambda(1-\alpha)}{\alpha\beta}\epsilon\right)}{\alpha^{-(1+\frac{\lambda}{\beta})}((1+\frac{\lambda}{\beta})\frac{1-\alpha}{\alpha}\epsilon-1)}\right),\\
C_2&=\frac{1}{(1-\alpha)(\lambda+2\beta)}\left(1 + \frac{\alpha^{\frac{\beta}{\lambda}}\left(1+\frac{\beta(1-\alpha)}{\alpha\lambda}\epsilon\right)}{\alpha^{-(1+\frac{\beta}{\lambda})}((1+\frac{\beta}{\lambda})\frac{1-\alpha}{\alpha}\epsilon-1)}\right),\\
C_3&=\frac{\alpha}{(\beta+\lambda)(1-\alpha)}\left(\frac{\beta(1+ 2\beta+\lambda)}{(2\beta+\lambda)} -\frac{(\beta\lambda + \beta+2\lambda)}{(\beta+2\lambda)}\right) \\
&+\frac{\beta}{\beta+2\lambda} + \frac{\frac{\beta}{\beta+2\lambda}\epsilon -\frac{\alpha(\beta\lambda + \beta+2\lambda)}{(\beta+\lambda)(\beta+2\lambda)(1-\alpha)}}{\alpha^{-(1+\frac{\beta}{\lambda})}((1+\frac{\beta}{\lambda})\frac{1-\alpha}{\alpha}\epsilon-1)}\\
& -\frac{1}{2\beta+\lambda} - \frac{ \frac{1}{2\beta+\lambda}\epsilon-\frac{\alpha\beta(1+ 2\beta+\lambda)}{(\beta+\lambda)(2\beta+\lambda)(1-\alpha)}}{\alpha^{-(1+\frac{\lambda}{\beta})}((1+\frac{\lambda}{\beta})\frac{1-\alpha}{\alpha}\epsilon-1)}.
 \eeq
Solving eq.~(\ref{xx}) analytically for $K_0$ is challenging, however an approximate solution can be found via iterative methods, as we show below. 

To proceed we take the zeroth order approximation to be $K_0^{(0)}=1$ (the value found in Theorem 1). The first order value $K_0^{(1)}$ is found by taking  $K_0^{(0)}\approx1$ on the RHS of eq.~(\ref{xx}), implying
$K_0^{(1)}\approx\frac{C_3- C_1}{C_2}$.
Then an approximate solution for $K_0$ is provided by the second order iteration $K_0^{(2)}$ found by 
substituting $K_0^{(1)}$ into the RHS of eq.~(\ref{xx}) to obtain
\beqn
K_0^{(2)}\approx \frac{C_3}{C_2} - \frac{C_1}{C_2} \left(\frac{C_3- C_1}{C_2}\right)^{-\frac{\lambda}{\beta}}~.
\eeqn

\pagebreak
Finally, using this second order approximate expression $K_0^{(2)}$, along with the exact forms of $K_1$ and $K_2$ obtained in eqns.~(\ref{10}) \& (\ref{11}) in eq.~(\ref{eq:8}) of Lemma 2, we derive the approximate equilibrium strategies for the more general case with $\lambda \neq \beta$. Specifically, we obtain the following result:

	\begin{theorem}
	The quasi-equilibrium strategies for each country for $r_i \in [\epsilon,1]$ to leading order in $\epsilon$ and to second order precision in $K_0$ are given by eq.~(\ref{yy}).

\vspace{2mm}

\begin{widetext}
\beqn \label{yy}
f_1(r_1) \approx & \left( \frac{C_3}{C_2} - \frac{C_1}{C_2} \left(\frac{C_3- C_1}{C_2}\right)^{-\frac{\lambda}{\beta}}\right)\frac{((1-\alpha)r_1+\alpha)^{\frac{\beta}{\lambda}}}{(\lambda+2\beta)(1-\alpha)}
+\frac{\beta}{\beta+2\lambda}r_1-\frac{\alpha(\beta\lambda + \beta+2\lambda)}{(\beta+\lambda)(\beta+2\lambda)(1-\alpha)} \\ &+ \Bigg[\frac{ \frac{K_0}{(\lambda+2\beta)(1-\alpha)}\alpha^{\frac{\beta}{\lambda}}(1+\frac{\beta(1-\alpha)}{\alpha\lambda}\epsilon) + \frac{\beta}{\beta+2\lambda}\epsilon -\frac{\alpha(\beta\lambda + \beta+2\lambda)}{(\beta+\lambda)(\beta+2\lambda)(1-\alpha)}}{\alpha^{-(1+\frac{\beta}{\lambda})}((1+\frac{\beta}{\lambda})\frac{1-\alpha}{\alpha}\epsilon-1)}\Bigg]((1-\alpha)r_1+\alpha)^{-(1+\frac{\beta}{\lambda})},\\[10pt]
f_2(r_2)  \approx &\lambda\beta \left( \frac{C_3}{C_2} - \frac{C_1}{C_2} \left(\frac{C_3- C_1}{C_2}\right)^{-\frac{\lambda}{\beta}}\right)^{-\frac{\lambda}{\beta}} \frac{((1-\alpha)r_2+\alpha)^{\frac{\lambda}{\beta}}}{(2\lambda+\beta)(1-\alpha)}
+\frac{\lambda}{2\beta+\lambda}r_2-\frac{\alpha\beta\lambda(1+ 2\beta+\lambda)}{(\beta+\lambda)(2\beta+\lambda)(1-\alpha)} \\ &+ \Bigg[\frac{ \frac{\lambda\beta K_0^{-\frac{\lambda}{\beta}}}{(2\lambda+\beta)(1-\alpha)}\alpha^{\frac{\lambda}{\beta}}(1+\frac{\lambda(1-\alpha)}{\alpha\beta}\epsilon) + \frac{\lambda}{2\beta+\lambda}\epsilon-\frac{\alpha\beta\lambda(1+ 2\beta+\lambda)}{(\beta+\lambda)(2\beta+\lambda)(1-\alpha)}}{\alpha^{-(1+\frac{\lambda}{\beta})}((1+\frac{\lambda}{\beta})\frac{1-\alpha}{\alpha}\epsilon-1)}\Bigg]((1-\alpha)r_2+\alpha)^{-(1+\frac{\lambda}{\beta})}~.
\eeqn
\end{widetext}
	\end{theorem}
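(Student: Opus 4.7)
The plan is to specialize Lemma 2's general form for $(f_1,f_2)$, which contains three undetermined constants $K_0, K_1, K_2$, so as to satisfy the three modified boundary conditions $f_1(\epsilon)=f_2(\epsilon)=0$ and $f_1(1)=f_2(1)/\lambda$, working systematically to leading order in the small cutoff $\epsilon$ and to second order in a fixed-point iteration for $K_0$.

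First I would impose $f_1(\epsilon)=0$. Taylor-expanding the powers $((1-\alpha)\epsilon+\alpha)^p = \alpha^p\bigl(1+p(1-\alpha)\epsilon/\alpha + O(\epsilon^2)\bigr)$ inside the expression from Lemma 2 linearizes the condition in $\epsilon$, and because $K_1$ enters linearly one can solve for it as the affine function of $K_0$ displayed in eq.~(\ref{10}). An identical procedure applied to $f_2(\epsilon)=0$ yields $K_2$ as the affine function of $K_0^{-\lambda/\beta}$ shown in eq.~(\ref{11}). These two steps are routine algebra; their only subtlety is the bookkeeping of the $\epsilon$ expansion, since retaining more than first order would change nothing at the stated level of precision.

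Next I would substitute these expressions for $K_1$ and $K_2$ into the remaining boundary condition $f_1(1)=f_2(1)/\lambda$. After collecting terms by powers of $K_0$, the result rearranges into the implicit equation $C_2 K_0 = C_3 - C_1 K_0^{-\lambda/\beta}$, namely eq.~(\ref{xx}), with the coefficients $C_1,C_2,C_3$ exactly as listed in the excerpt. The main obstacle is that, in contrast to the $\lambda=\beta$ case where the analogous equation reduced to a quadratic (cf.\ Theorem 1), this equation is genuinely transcendental for generic $\lambda\neq\beta$ and admits no closed-form solution. The remedy is a fixed-point iteration anchored at the exact value $K_0=1$ from Theorem 1: one iteration yields $K_0^{(1)}\approx (C_3-C_1)/C_2$, and substituting this back into the right-hand side of eq.~(\ref{xx}) produces $K_0^{(2)}\approx C_3/C_2 - (C_1/C_2)\bigl((C_3-C_1)/C_2\bigr)^{-\lambda/\beta}$.

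Finally I would substitute $K_0^{(2)}$, together with the expressions for $K_1$ and $K_2$ from eqs.~(\ref{10})--(\ref{11}) (in which $K_0$ is kept symbolic, since the $K_1,K_2$ formulas are already exact in $K_0$ at first order in $\epsilon$), directly into the general solutions of Lemma 2. This substitution is mechanical and reproduces the two expressions stated in eq.~(\ref{yy}). The resulting approximation is accurate to leading order in $\epsilon$ by construction of $K_1,K_2$, and to the stated second order in the $K_0$ iteration; consistency with Theorem 1 is manifest, since setting $\lambda=\beta$ and sending $\epsilon\to 0$ collapses $C_3/C_2 - (C_1/C_2)(\cdots)^{-1}$ back to $K_0=1$.
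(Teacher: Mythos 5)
Your proposal is correct and follows essentially the same route as the paper: solve $f_1(\epsilon)=f_2(\epsilon)=0$ to first order in $\epsilon$ for $K_1$ and $K_2$, reduce $f_1(1)=f_2(1)/\lambda$ to the implicit relation $C_2K_0=C_3-C_1K_0^{-\lambda/\beta}$, run the fixed-point iteration anchored at $K_0=1$ to obtain $K_0^{(2)}$, and substitute back into Lemma 2. The only cosmetic difference is your added consistency check against Theorem 1, which the paper does not spell out.
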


\vspace{-1mm}
\section{6.~Discussion}
\vspace{-3mm}

Auction theory has previously been used to model competition between animals \cite{Bishop}, countries  \cite{ONeill,Hodler} and individuals or groups  \cite{Baye,Siegel,Snyder}. Moreover, for game theoretic approaches to military conflicts see \cite{Haavelmo,Garfinkel,Hirshleifer,Skaperdas,Grossman}. Here we have extended the earlier work of Hodler and Yekta\c s \cite{Hodler},  on forfeiture due to conflicts, to include the important case of partial resource losses. In our analysis the fraction of resources ceded by the loser is controlled by the variable $\alpha\in [0,1] $. Note that the variable $\alpha$ measures loss of resources, thus in the context of military conquest $\alpha$ need not strictly measure captured land area, since resources (oil, minerals, etc.) may not be distributed uniformly.

Notably, one of the more novel and topical applications of the all pay auction model developed here is to model trade wars. Trade wars occur when two countries create tariffs in response to the other country. Each country is looking to limit their opponent economically, however this also typically harms their domestic consumers, as the price of certain items will increase due to lack of imported materials and thus creation of tariffs entails a cost.  
The original interpretation put forward in \cite{Hodler} was that $\tilde{\lambda_i}$ corresponded to military power and the forfeit for the losing country was a total loss of resources. The framework presented here  can be reinterpreted as a model of trade wars if the variable  $\tilde{\lambda_i}$ is taken to be the ability of a country to introduce tariffs, while $\alpha$ is interpreted as an economic loss (e.g.~market share) relative to competitors. In the context of trade wars, Lemma 3 gives the exact equilibrium strategies when neither country has a comparative advantage  for a given level of production, and Lemma 6 gives the approximate solution for countries with dissimilar tariffs or production rates. 

We note that this analysis assumes net-zero economic loss between the two countries. This simplifying assumption could be relaxed by modifying the $(1 - \alpha)$ term in eq.~(\ref{WWW}) to include an additional $\kappa\in[0,1]$ variable which accounts for loss to third party competitors who are not otherwise involved in the trade war, such that $\alpha\in[0,\kappa]$.

		Trade wars have become an important element of current world economics. However, there are  few theoretical analyses of trade wars and this work, in the context of auction theory, adds a new aspect to the literature.

\vspace{2mm}
{\bf Acknowledgements.} 
This research was undertaken as part of the MIT-PRIMES program. JU is grateful for support from the National Science Foundation through grant  DMS-1440140 while at the Mathematics Sciences Research Institute, Berkeley during Fall 2019.

\newpage

\begingroup
\renewcommand{\section}[2]{}%

\endgroup

\end{document}